\begin{document}

\newtheorem{definition}{Definition}
\newtheorem{lemma}{Lemma}
\newtheorem{corollary}{Corollary}
\newtheorem{theorem}{Theorem}
\newtheorem{example}{Example}
\newtheorem{proposition}{Proposition}
\newtheorem{remark}{Remark}
\newtheorem{assumption}{Assumption}
\newtheorem{corrolary}{Corrolary}
\newtheorem{property}{Property}
\newtheorem{ex}{EX}
\newtheorem{problem}{Problem}
\newcommand{\argmin}{\arg\!\min}
\newcommand{\argmax}{\arg\!\max}
\newcommand{\st}{\text{s.t.}}
\newcommand \dd[1]  { \,\textrm d{#1}  }

\title{\bf \Large A Compositional Resilience Index for Computationally Efficient Safety Analysis of Interconnected Systems}

\author{Luyao Niu$^{1}$, Abdullah Al Maruf$^{1}$, Andrew Clark$^2$, J. Sukarno Mertoguno$^3$, and Radha Poovendran$^1$ %
\thanks{$^1$Luyao Niu, Abdullah Al Maruf, and Radha Poovendran are with the Network Security Lab, Department of Electrical and Computer Engineering,
University of Washington, Seattle, WA 98195-2500
        {\tt\small \{luyaoniu,maruf3e,rp3\}@uw.edu}}
\thanks{$^2$Andrew Clark is with the Electrical and Systems Engineering Department, McKelvey School of Engineering, Washington
University in St. Louis, St. Louis, MO 63130
        {\tt\small andrewclark@wustl.edu}}%
\thanks{$^3$J. Sukarno Mertoguno is with the School of Cybersecurity and
Privacy, Georgia Institute of Technology, Atlanta, GA 30332
{\tt\small karno@gatech.edu}}
}

\maketitle

\begin{abstract}
Interconnected systems such as power systems and chemical processes are often required to satisfy safety properties in the presence of faults and attacks.
Verifying safety of these systems, however, is computationally challenging due to nonlinear dynamics, high dimensionality, and combinatorial number of possible faults and attacks that can be incurred by the subsystems interconnected within the network.
In this paper, we develop a compositional resilience index to verify safety properties of interconnected systems under faults and attacks.
The resilience index is a tuple serving the following two purposes.
First, it quantifies how a safety property is impacted when a subsystem is compromised by faults and attacks.
Second, the resilience index characterizes the needed behavior of a subsystem during normal operations to ensure  safety violations will not occur when future adverse events occur.
We develop a set of sufficient conditions on the dynamics of each subsystem to satisfy its safety constraint, and leverage these conditions to formulate an optimization program to compute the resilience index.
When multiple subsystems are interconnected and their resilience indices are given, we show that the safety constraints of the interconnected system can be efficiently verified by solving a system of linear inequalities.
We demonstrate our developed resilience index using a numerical case study on chemical reactors connected in series.
\end{abstract}


\section{Introduction}

Safety-critical interconnected systems are widely seen in real-world applications such as power systems \cite{sullivan2017cyber} and chemical processes \cite{el2005fault}.
Safety violations can lead to significant economic losses and severe damage to the system and/or human operators engaged with the system \cite{sullivan2017cyber,Jeep, fan2020fast,koscher2010experimental,chen2020guaranteed}.
Therefore, it is of critical importance to verify safety properties for such large-scale or even societal-scale systems.

One approach to verify safety is to use reachability analysis.
Computing reachable sets for nonlinear systems is known to be undecidable \cite{fijalkow2019decidability}.
Alternatively, solutions to safety verification by ensuring forward invariance of safety sets \cite{prajna2007framework,niu2022verifying,clark2021verification,manna2012temporal} or approximating reachable sets \cite{mitchell2005time,bertsekas1971minimax,fisac2018general} have been developed.
However, these approaches do not scale to interconnected systems of high dimensions.
Large-scale systems such as power systems generally consist of multiple interconnected subsystems, motivating the development of compositional approaches \cite{sloth2012compositional,lyu2022small,coogan2014dissipativity,anand2021small}.
These approaches decompose the safety verification problem into a set of problems of smaller scales formulated on the subsystems, and thus are more tractable. 


The approaches in \cite{sloth2012compositional,lyu2022small,coogan2014dissipativity,anand2021small} assume that the systems are operated under benign environments, making the verified safety properties invalid for systems under faults and attacks. 
For interconnected systems, an error from one faulty or compromised subsystem could propagate and accumulate through interconnections and impact the safety of other subsystems.
A na\"ive approach to safety verification for interconnected systems operated under adversarial environments is to enumerate all possible faulty or compromised subsystems, and perform safety analysis.
However, the number of possible faults or attacks that can be incurred by the interconnected system is combinatorial.
At present, scalable safety verification of large-scale interconnected systems under faults and attacks has been less studied.


In this paper, we develop a compositional safety verification approach for large-scale interconnected systems whose subsystems can be faulty or compromised by attacks.
Each subsystem is subject to a safety constraint.
We derive a set of conditions on the dynamics of a subsystem to guarantee its safety.
We parameterize these conditions using a tuple of real numbers, termed resilience index.
Our resilience index defines the amount of time that the system can safely remain in a faulty state, the amount of time required to recover from faults, and constraints on the system dynamics that must be satisfied during faulty as well as normal operation.
The resilience index allows us to convert the problem of safety verification of large-scale interconnected systems to a set of algebraic computations, and thus makes safety verification feasible for large-scale systems.
To summarize, this paper makes the following contributions.
\begin{itemize}
    \item 
    We formulate a resilient index for a subsystem that experiences faults or attacks.
    We prove safety guarantees for a subsystem based on the resilience index. We develop a sum-of-squares optimization to compute the resilience index for a subsystem.
    \item 
    We derive a system of linear inequalities to quantify how the resilience index of a subsystem changes due to interconnections.
    Using the derived linear inequalities, we develop the conditions on the interconnections so that all subsystems are safe under faults and attacks.
    \item We demonstrate the proposed resilience indices and their usage for safety analysis by using a numerical case study on chemical process.
\end{itemize}

The rest of this paper is organized as follows. Section \ref{sec:related} presents related work. 
Section \ref{sec:formulation} describes the problem formulation.
Section \ref{sec:RI} develops the compositional resilience index for each subsystem.
In Section \ref{sec:interconnection}, we derive the set of inequalities to compute resilience indices after interconnection.
Section \ref{sec:case study} demonstrates the proposed approach using a numerical case study.
We conclude the paper in Section \ref{sec:conclusion}.

\section{Related Work}\label{sec:related}

Safety verification \cite{clarke1997model,manna2012temporal,prajna2007framework} and safety-critical control \cite{ames2016control,xu2018constrained,fan2020fast,breeden2021guaranteed} haven been investigated for systems operated in benign environments. 
To mitigate faults and attacks against safety-critical systems, various techniques have been developed. 
Attack detection and secure state estimation under attacks have been studied in \cite{fawzi2014secure,pajic2014robustness}.
Fault-tolerant and resilient control schemes \cite{sha2001using,abdi2018guaranteed,mertoguno2019physics,niu2022verifying} have been proposed to withstand the attacks and guarantee system safety.
For interconnected systems consisting of multiple subsystems, the systems are of high dimensionality and the attack surface grows as interconnected systems involving more subsystems, making safety verification computationally expensive.

Compositional approaches have been adopted for safety verification of interconnected systems deployed in the absence of faults or attacks \cite{sloth2012compositional,lyu2022small,coogan2014dissipativity,anand2021small}.
These approaches have utilized techniques including barrier certificates \cite{sloth2012compositional,anand2021small}, small-gain theorem \cite{lyu2022small}, and dissipativity property \cite{coogan2014dissipativity}.
When the system is operated under faulty or adversarial environments, these approaches become less effective.

The authors of \cite{yang2015fault,yang2020exponential} re-configured the control laws of each subsystem and interconnection topology to guarantee safety.
Such approach is computationally expensive when re-configuring the network topology and control laws.
Furthermore, in applications such as power systems, re-designing interconnection topology is less desired or even impractical.
In this paper, we develop a compositional resilience index and prove that the safety of each subsystem interconnected within a network can be analyzed by solving a system of linear inequalities derived using resilience index. 
Our developed approach does not require re-configuring the network topology or control laws, and hence is more computationally efficient.
In \cite{al2022compositional}, the authors decomposed the dynamics of each subsystem into intrinsic and coupled terms, where the former term is independent of the other subsystems and the latter one depends on interconnections.
A resilience index was defined for each term of a compromised subsystem by bounding how fast a subsystem approaches the boundary of safety set.
Such resilience indices were computed by sum-of-squares optimization, and allowed the synthesis of safe control law under fixed interconnections.
When the interconnections change, the resilience indices and safe control law in \cite{al2022compositional} could not always guarantee safety property.
In this paper, we propose a resilience index and derive a system of linear inequalities that can applied to verify safety when interconnections change. 
When the inequalities are feasible, the interconnected system can satisfy the safety constraints.

\section{Problem Formulation}\label{sec:formulation}

We first define the notations that will be used throughout this paper.
Let $x\in\mathbb{R}^n$. 
We denote the $k$-th entry of $x$ as $[x]_k$.
A continuous function $\alpha:[-b,a)\rightarrow(-\infty,\infty)$ belongs to extended class $\mathcal{K}$ if it is strictly increasing and $\alpha(0)=0$ for some $a,b>0$.
Linear functions $\alpha(x)=zx$ defined over $[-b,a)$ are extended class $\mathcal{K}$ functions when $z>0$.

We consider a collection of subsystems $\{\mathcal{S}_i\}_{i\in\mathcal{N}}$, where $\mathcal{N}=\{1,\ldots,N\}$. Each subsystem $\mathcal{S}_i$ individually follows dynamics given as 
\begin{equation}\label{eq:subsystem dynamics}
    \mathcal{S}_i:\quad \Dot{x}_i = f_i(x_i) + g_i(x_i)u_i,
\end{equation}
where $x_i\in\mathcal{X}_i\subseteq\mathbb{R}^{n_i}$ is the state of subsystem $i$, and $u_i\in\mathcal{U}_i\subseteq\mathbb{R}^{p_i}$ is the external control input $u_i$ applied to subsystem $i$. 
We consider that the external control input $u_i$ will be chosen following a feedback control policy $\mu_i:\mathbb{R}^{n_i}\rightarrow\mathcal{U}_i$. 
Given the control policy $\mu_i$ and an initial state $x_i(0)$ at time $t=0$, we denote the trajectory of subsystem $\mathcal{S}_i$ as $x_i(t;x_i(0),\mu_i)$.
Functions $f_i:\mathbb{R}^{n_{i}}\rightarrow \mathbb{R}^{n_i}$ and $g_i:\mathbb{R}^{n_{i}}\rightarrow \mathbb{R}^{n_i\times p_i}$ are locally Lipschitz continuous. 

We consider that each subsystem $\mathcal{S}_i$ is required to satisfy a safety constraint defined over a set $\mathcal{C}_i=\{x_i:h_i(x_i)\geq 0\}$, i.e., $x_i(t;x_i(0),\mu_i)\in\mathcal{C}_i$ is required to hold for all $t\geq 0$.
Function $h_i:\mathbb{R}^{n_i}\rightarrow\mathbb{R}$ is continuously differentiable. 
We assume that the subsystem is initially safe, i.e., $x(0)\in\mathcal{C}_i$.
For each subsystem $\mathcal{S}_i$, we assume that we are given a control law $\mu_i$ such that $x_i(t;x_i(0),\mu_i)\in\mathcal{C}_i$ holds for all $t\geq 0$.
Such a safe control law $\mu_i$ can be synthesized by using approaches such as control barrier functions \cite{ames2016control}.

We assume that each subsystem can be faulty or compromised by an adversary.
When the subsystem is faulty or under attack, the safe control law $\mu_i$ becomes offline, and the control input received by subsystem $\mathcal{S}_i$ can be arbitrarily altered to some $\Tilde{u}_i\in\mathcal{U}_i$ that deviates from $\mu_i$.
To mitigate the persistence of faults and attacks, we consider that the subsystem recovers control law $\mu_i$ after the occurrence of faults and attacks leveraging fault/ attack detection and isolation techniques \cite{pasqualetti2013attack}.

To capture the fact that faults and attacks cause the control input to deviate from $\mu_i$ to arbitrary $\Tilde{u}_i$, we represent each subsystem $\mathcal{S}_i$ under faults and attacks as a hybrid system $\mathcal{H}_i=(\mathcal{X}_i,\mathcal{U}_i,\mathcal{L},\mathcal{Y}_i,\mathcal{Y}_i^{init},Inv_i,\mathcal{F}_i,\allowbreak \Sigma_i,\mathcal{G}_i)$, where
\begin{itemize}
    \item $\mathcal{X}_i\subseteq\mathbb{R}^{n_i}$ is the continuous state space of subsystem $\mathcal{S}_i$, and  $\mathcal{U}_i\subseteq\mathbb{R}^{p_i}$ is the set of admissible control inputs.
    \item $\mathcal{L}=\{\texttt{offline}, \texttt{online}\}$ is a set of discrete locations capturing whether control law $\mu_i$ is available (\texttt{online}) or not (\texttt{offline}). 
    \item $\mathcal{Y}_i=\mathcal{X}_i\times\mathcal{L}$ is the state space of hybrid system $H$, and $\mathcal{Y}_i^{init}\subseteq \mathcal{Y}_i$ is the set of initial states.
    \item $Inv_i:\mathcal{L}\rightarrow 2^{\mathcal{X}_i}$ is the invariant that maps from the set of locations to the power set of $\mathcal{X}_i$. That is, $Inv_i(l)\subseteq\mathcal{X}_i$ specifies the set of possible continuous states when the system is at location $l\in\mathcal{L}$.
    \item $\mathcal{F}_i$ is the set of vector fields. For each $F_i\in\mathcal{F}_i$ in the form of Eqn. \eqref{eq:subsystem dynamics}, the continuous system state evolves as $\dot{x}_i=F_i(x_i,u_i,l)$, where $F_i$ is jointly determined by the system dynamics and the availability of control law $\mu_i$, and $\dot{x}_i$ represents the time derivative of $x_i$.
    \item $\Sigma_i\subseteq \mathcal{Y}_i\times \mathcal{Y}_i$ is the set of transitions between the states of the hybrid system. A transition $\sigma_i=((x_i,l),(x_i',l'))$ models the state transition from $(x_i,l)$ to $(x_i',l')$.
\end{itemize}

In application such as power systems and vehicle platoons, multiple subsystems are interconnected as a network.
The interconnections introduce couplings among subsystems, leading to the following dynamics for each subsystem
\begin{multline}\label{eq:interconnected dynamics}
    \Dot{x}_i = f_i(x_i) + g_i(x_i)\mu_i(x_i) \\
    + \sum_{j\neq i}W_{ji}(x_j,x_i)- \sum_{j\neq i}W_{ij}(x_i,x_j),
\end{multline}
where $W_{ij}(x_i,x_j)$ captures the interconnection between subsystems $\mathcal{S}_i$ and $\mathcal{S}_j$.
Note that interconnections are not necessarily symmetric, i.e., $W_{ij}(x_i,x_j)$ and $W_{ji}(x_j,x_i)$ may not be identical.

We denote the state and joint control input of the interconnected system as $x=[x_1^\top,\ldots,x_N^\top]^\top$ and $u=[u_1^\top,\ldots,u_N^\top]^\top$, respectively.
The interconnected system is therefore of high dimension and nonlinear.
Furthermore, when each subsystem can possibly be compromised or faulty, the number of faults or attacks incurred by the interconnected system is combinatorial, making safety verification computationally intractable.
In this paper, we investigate the following problem.


\begin{problem} \label{prob:interconnection}
Suppose that we are given a collection of subsystems $\{\mathcal{S}_i\}_{i=1}^N$ and their safe control laws $\mu_i$ with respect to their individual safety set $\mathcal{C}_i$, where $i\in\mathcal{N}$. 
The subsystems, which are potentially subject to faults and attacks, are interconnected within a network and each of them follows dynamics given in Eqn. \eqref{eq:interconnected dynamics}.
The goal is to verify whether the interconnected system satisfies the set of safety constraints defined over $\mathcal{C}_i$ for all $i=1,\ldots,N$.
\end{problem}

\section{Proposed Resilience Index}\label{sec:RI}

In this section, we propose a compositional resilience index for each subsystem to verify safety.

\subsection{Definition of Resilience Index}\label{subsec:hybrid system}

We note that the discrete location set $\mathcal{L}$ is uniform to all subsystems, allowing us to develop a unified index to measure the resilience of any subsystem under faults and attacks. 
Our insight is as follows.
At location \texttt{offline}, the safe control law $\mu_i$ is unavailable.
To avoid violating the safety constraint, we require the subsystem to stay within a set $\mathcal{D}_i\subseteq\mathcal{C}_i$ so that $x_i\in\mathcal{C}_i$ for all time when the hybrid system is at location \texttt{offline}.
When the hybrid system transitions from \texttt{offline} to \texttt{online}, the control law $\mu_i$ becomes available. 
Thereafter, the subsystem starts to \emph{recover} from faults and attacks.
To recover the control law and mitigate potential faults and attacks in the future, we let the control law $\mu_i$ to remain available for at least $\phi_i\geq 0$ amount of time. 
Furthermore, the system is required to reach set $\mathcal{D}_i$ within $\phi_i$ so that safety constraint will not be violated when attacks or faults occur in the future. 
Such insight allows us to define the following resilience index to capture each subsystem's resilience under faults and attacks.

\begin{definition}[Resilience Index of a Subsystem]\label{def:resilience index}
Consider a subsystem $\mathcal{S}_i$ that uses a feedback control law $\mu_i$ and is under a safety constraint defined on set $\mathcal{C}_i=\{x_i:h_i(x_i)\geq 0\}$.
Let subsystem $\mathcal{S}_i$ be formulated as hybrid system $\mathcal{H}_i$ and $d_i,\eta_i\geq0$, $\tau_i,\phi_i>0$.
We say subsystem $\mathcal{S}_i$ is $(d_i,\tau_i,\phi_i,\eta_i)$-resilient if the following conditions hold
\begin{itemize}
    \item Set $\mathcal{D}_i$ defined as $\mathcal{D}_i=\{x_i:h_i(x_i)-d_i\geq 0\}$ is forward invariant if control law $\mu_i$ is used and the hybrid system $\mathcal{H}_i$ in at location \emph{\texttt{online}}.
    \item After reaching location \emph{\texttt{offline}}, hybrid system $\mathcal{H}_i$ remains at location \emph{\texttt{offline}} for at most $\tau_i$ amount of time before transition from \emph{\texttt{offline}} to \emph{\texttt{online}} occurs for any $x_i$.
    \item Following the transition from \emph{\texttt{offline}} to \emph{\texttt{online}}, hybrid system $\mathcal{H}_i$ remains at location \emph{\texttt{online}} for at least $\phi_i$ amount of time.
    \item 
    When $x_i$ is at the boundary of $\mathcal{D}_i$ and the hybrid system $\mathcal{H}_i$ in at location \emph{\texttt{online}}, the time derivative of $x_i$ is lower bounded by $\eta_i$.
    Furthermore, given any state $x_i'\in\mathcal{C}_i\setminus\mathcal{D}_i$, the continuous state $x_i$ reaches $\mathcal{D}_i$ within $\phi_i$ amount of time when $\mathcal{H}_i$ is at location \emph{\texttt{online}} and control law $\mu_i$ is used.
\end{itemize}
The quadruple $(d_i,\tau_i,\phi_i,\eta_i)$ is the resilience index of $\mathcal{S}_i$.
\end{definition}


In what follows, we derive a set of conditions to compute the resilience index of a subsystem.
\begin{proposition}\label{prop:resilience index}
Consider a subsystem in Eqn. \eqref{eq:subsystem dynamics} under attack and a safety set $\mathcal{C}_i$. 
Let $h_i^d(x_i) = h_i(x_i)-d_i$ and $\mathcal{D}_i=\{x_i:h_i^d(x_i)\geq 0\}\subseteq\mathcal{C}_i$.
Suppose $x_i(0)\in\mathcal{D}_i$. If there exist constants $d_i,\eta_i\geq0$, $\tau_i,\phi_i>0$, and an extended class $\mathcal{K}$ function $\alpha_i(\cdot)$ such that
\begin{subequations}\label{eq:index definition}
\begin{align}
    &\frac{\partial h_i^d}{\partial x_i}(x_i)[f_i(x_i)+g_i(x_i)u_i]\geq -\frac{d_i}{\tau_i},~\forall (x_i,u_i)\in\mathcal{C}_i\times\mathcal{U}_i\label{eq:index definition 1}\\
    &\frac{\partial h_i^d}{\partial x_i}(x_i)[f_i(x_i)+g_i(x_i)\mu_i(x_i)]\geq \frac{d_i}{\phi_i},~\forall x_i\in\mathcal{C}_i\setminus\mathcal{D}_i\label{eq:index definition 2}\\
    &\frac{\partial h_i^d}{\partial x_i}(x_i)[f_i(x_i)+g_i(x_i)\mu_i(x_i)]\geq-\alpha_i(h_i^d(x_i))+\eta_i,\nonumber\\
    &\quad\quad\quad\quad\quad\quad\quad\quad\quad\quad\quad\quad\quad\quad\quad\quad\quad~\forall x\in\mathcal{D}_i\label{eq:index definition 3}
\end{align}
\end{subequations}
then subsystem $\mathcal{S}_i$ is safe with respect to $\mathcal{C}_i$.
\end{proposition}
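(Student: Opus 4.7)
The plan is to monitor the auxiliary function $h_i^d(x_i(t))$ throughout every phase of the hybrid execution of $\mathcal{H}_i$, and to show inductively that $h_i^d(x_i(t)) \geq -d_i$ for all $t \geq 0$, which is equivalent to $h_i(x_i(t)) \geq 0$, i.e., $x_i(t) \in \mathcal{C}_i$. The trajectory decomposes into an alternating sequence of online and offline phases; by assumption $x_i(0) \in \mathcal{D}_i$, so the base case $h_i^d(x_i(0)) \geq 0$ holds. I would argue phase by phase: (a) in online phases that start within $\mathcal{D}_i$, the set $\mathcal{D}_i$ is forward invariant; (b) in offline phases entered from $\mathcal{D}_i$, the value of $h_i^d$ can drop by at most $d_i$; (c) in online recovery phases re-entered from an offline phase, the trajectory returns to $\mathcal{D}_i$ within the guaranteed minimum dwell time $\phi_i$.

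For case (a), condition \eqref{eq:index definition 3} is precisely a control-barrier-type inequality: on $\partial \mathcal{D}_i$ we have $h_i^d = 0$, whence $\dot{h}_i^d \geq -\alpha_i(0) + \eta_i = \eta_i \geq 0$, and applying Nagumo's theorem, or equivalently the comparison lemma against the scalar majorant $\dot{z} = -\alpha_i(z) + \eta_i$ with $z(0) \geq 0$, yields forward invariance of $\mathcal{D}_i$ during the online phase. For case (b), the system enters the offline location from $\mathcal{D}_i$ by case (a), so $h_i^d \geq 0$ at the switching instant; integrating the pointwise lower bound \eqref{eq:index definition 1} over the offline duration, which is bounded by $\tau_i$, gives $h_i^d(x_i(t)) \geq 0 - (d_i/\tau_i)\tau_i = -d_i$, hence $x_i(t) \in \mathcal{C}_i$ throughout the offline phase. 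For case (c), the recovery begins with $h_i^d \geq -d_i$; while $x_i \in \mathcal{C}_i \setminus \mathcal{D}_i$, condition \eqref{eq:index definition 2} yields $\dot{h}_i^d \geq d_i/\phi_i$, so $h_i^d$ grows at linear rate at least $d_i/\phi_i$ and attains $0$ within $\phi_i$ seconds. Because the resilience index enforces a minimum online dwell of $\phi_i$, the trajectory re-enters $\mathcal{D}_i$ before any new offline transition can occur, and case (a) resumes.

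Concatenating (a), (b), and (c) over successive switching times, together with continuity of $x_i(t)$ at the switching instants, yields $h_i(x_i(t)) \geq 0$ for all $t \geq 0$ by induction on the switching index. The main obstacle is the barrier-style invariance argument in case (a), which relies on the extended class $\mathcal{K}$ property of $\alpha_i$ to guarantee that the scalar comparison ODE $\dot{z} = -\alpha_i(z) + \eta_i$ with $z(0) \geq 0$ preserves non-negativity of $z$; modulo this standard step, the remainder reduces to straightforward integration of the two affine bounds \eqref{eq:index definition 1} and \eqref{eq:index definition 2}, so the rest is essentially bookkeeping that the pre-state of each offline entry lies in $\mathcal{D}_i$ and that the minimum dwell $\phi_i$ is long enough for the recovery to complete.
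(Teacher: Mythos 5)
Your proposal is correct and follows essentially the same route as the paper's proof: bound the drop of $h_i^d$ during the offline phase by integrating \eqref{eq:index definition 1} over a window of length at most $\tau_i$, use \eqref{eq:index definition 2} to show recovery to $\mathcal{D}_i$ within the minimum online dwell $\phi_i$, and invoke \eqref{eq:index definition 3} for forward invariance of $\mathcal{D}_i$. The only cosmetic differences are that you organize the argument as an explicit induction over switching phases and use the comparison lemma for invariance, whereas the paper fixes a single arbitrary attack time and argues the containment in $\mathcal{C}_i$ by contradiction, citing the standard control barrier function result for the invariance step.
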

\begin{proof}
Without loss of generality, we assume that the subsystem is compromised by attack at some arbitrary time $t_0\geq 0$. 
For any $t\in[t_0,t_0+\tau_i]$ and control input $u_i\in\mathcal{U}_i$, we have $h_i(x_i(t)) = h_i(x_i(t_0)) + \int_{s=t_0}^{t}\Dot{h}_i\dd s\geq d_i - \frac{d_i}{\tau_i}(t-t_0)\geq 0$,
where $\Dot{h}_i$ represents the time derivative of function $h_i$, the equality holds by the definition of $h_i(x_i(t))$, the first inequality holds by Eqn. \eqref{eq:index definition 1} and the observation that $\Dot{h}_i^d=\Dot{h}_i$, and the last inequality holds by $t\in[t_0,t_0+\tau]$.
Therefore, hybrid system $\mathcal{H}_i$ satisfies $x_i(t;x_i(t_0),u_i)\in\mathcal{C}_i$ for all $u_i\in\mathcal{U}_i$ when $x_i(t_0)\in\mathcal{D}_i$, $\mathcal{H}_i$ is at location $l=\text{\texttt{offline}}$ for at most $\tau_i$ amount of time, and Eqn. \eqref{eq:index definition 1} holds.

We next consider some arbitrary time $t\in [t_0+\tau_i,t_0+\tau_i+\phi_i]$ when the hybrid system is at location \texttt{online}.
We show that $x_i(t;x_i(t_0+\tau_i),\mu_i)\in\mathcal{C}_i$ for all $t\in[t_0+\tau_i,t_0+\tau_i+\phi_i]$ when the hybrid system remains at location \texttt{online} and control law $\mu_i$ is used.
We further show that there exists $t'\in [t_0+\tau_i,t_0+\tau_i+\phi_i]$ such that $x_i(t'';x_i(t_0+\tau_i),\mu_i)\in\mathcal{D}_i$ for all $t''\in[t',t_0+\tau_i+\phi_i]$ when the hybrid system remains at location \texttt{online} and control law $\mu_i$ is used.

We prove $x_i(t;x_i(t_0+\tau_i),\mu_i)\in\mathcal{C}_i$ for all $t\in[t_0+\tau_i,t_0+\tau_i+\phi_i]$ when the hybrid system remains at location \texttt{online} and $\mu_i$ is used by contradiction.
Suppose that the subsystem leaves the safety set.
Since the trajectory of $\mathcal{S}_i$ is continuous, then there exists time $t\in[t_0+\tau_i,t_0+\tau_i+\phi_i]$ such that $h(x(t))=0$ and $\dot{h}(x(t))<0$.
If such $x(t)\in\mathcal{C}_i\setminus\mathcal{D}_i$, we then have contradiction to Eqn. \eqref{eq:index definition 2} since $\frac{d_i}{\phi_i}>0$ and thus $\dot{h}(x(t))>0$.
If such $x(t)\in\mathcal{D}_i$, we also have contradiction since Eqn. \eqref{eq:index definition 3} implies that $\dot{h}(x(t))>\eta_i\geq0$.
Therefore, we can claim that $x_i(t;x_i(t_0+\tau_i),\mu_i)\in\mathcal{C}_i$ for all $t\in[t_0+\tau_i,t_0+\tau_i+\phi_i]$ when the hybrid system remains at location \texttt{online} and $\mu_i$ is used.

We now prove that there exists $t'\in [t_0+\tau_i,t_0+\tau_i+\phi_i]$ such that $x_i(t'';x_i(t_0+\tau_i),\mu_i)\in\mathcal{D}_i$ for all $t''\in[t',t_0+\tau_i+\phi_i]$ when the hybrid system remains at location \texttt{online} and control law $\mu_i$ is used.
Suppose no such $t'\in [t_0+\tau_i,t_0+\tau_i+\phi_i]$ exists.
We then have that $x_i(t_0+\tau_i+\phi_i)\in\mathcal{C}_i\setminus\mathcal{D}_i$ and hence $h_i(x_i(t_0+\tau_i+\phi_i))<d_i$.
By Eqn. \eqref{eq:index definition 2}, we have $h_i(x_i(t_0+\tau_i+\phi_i)) = h_i(x_i(t_0+\tau_i)) + \int_{s=t_0+\tau_i}^{t}\Dot{h}_i\dd s\geq \frac{d_i}{\phi_i}(t-t_0-\tau_i)\geq d_i$, leading to contradiction. 
Therefore, such $t'$ must exist.
Finally, by Eqn. \eqref{eq:index definition 3} and \cite{ames2016control}, we have that if $x_i(t';x_i(t_0+\tau_i),\mu_i)\in\mathcal{D}_i$, then set $\mathcal{D}_i$ if forward invariant when the hybrid system remains at location \texttt{online} and control law $\mu_i$ is used, indicating that $x_i(t'';x_i(t_0+\tau_i),\mu_i)\in\mathcal{D}_i$ for all $t''\in[t',t_0+\tau_i+\phi_i]$.
\end{proof}

\subsection{Computation of Resilience Index}\label{sec:resilience index sos}

In the following, we formulate a sum-of-squares (SOS) optimization program to compute the resilience index for any subsystem $\mathcal{S}_i$.
Under certain assumptions on the dynamics \eqref{eq:subsystem dynamics}, function $h_i(x)$, and control input set $\mathcal{U}_i$, we formulate the SOS program by converting the conditions in Eqn. \eqref{eq:index definition} into SOS constraints.
We make the following assumption.
\begin{assumption}\label{assump:semi-algebraic}
For any subsystem $\mathcal{S}_i$, we assume that functions $f_i(x_i)$, $g_i(x_i)$, and $h_i(x_i)$ are polynomial in $x_i$. In addition, we assume that $\mathcal{U}_i=\prod_{k=1}^{p_i}[u_{k,min},u_{k,max}]$ with $u_{k,min}<u_{k,max}$. 
\end{assumption}

In the following, we present the set of SOS constraints. 
We show that any $d_i,\eta_i\geq 0$ and $\tau_i,\phi_i>0$ satisfying the SOS constraints constitute the resilience index of $\mathcal{S}_i$.
\begin{proposition}\label{prop:sos}
Assume that control law $\mu_i$ is polynomial in $x_i$. Suppose there exist $d_i,\eta_i\geq 0$, and $\tau_i,\phi_i> 0$ such that the following expressions are SOS:
\begin{subequations}\label{eq:sos}
\begin{align}
    &\frac{\partial h_i^d}{\partial x_i}(x_i)[f_i(x_i)+g_i(x_i)u_i] +d_i\theta_i - q(x_i,u_i)h_i^d(x_i)\nonumber\\
    &\quad\quad\quad-\sum_{k=1}^{p_i}(w_k(x_i,u_i)([u]_k-[u]_{k,min})\nonumber\\
    &\quad\quad\quad\quad\quad\quad\quad\quad+v_k(x_i,u_i)([u]_{k,max}-[u]_k)),\label{eq:sos 1} \\
    &\frac{\partial h_i^d}{\partial x_i}(x_i)[f_i(x_i)+g_i(x_i)\mu_i(x_i)]-d_i\beta_i\nonumber\\
    &\quad\quad\quad\quad\quad\quad\quad\quad- l(x_i)h_i^d(x) + m(x_i)h_i(x_i),\label{eq:sos 2}\\
    &\frac{\partial h_i^d}{\partial x_i}(x_i)[f_i(x_i)+g_i(x_i)\mu_i(x_i)]\nonumber\\
    &\quad\quad\quad\quad\quad\quad\quad\quad +\alpha(h_i^d(x)) -r(x_i) h_i(x_i)-\eta_i,\label{eq:sos 3} 
\end{align}
\end{subequations}
where $l(x_i),m(x_i),q(x_i,u_i),r(x_i)$ are SOS, and $w_k(x_i,u_i)$ as well as $v_k(x_i,u_i)$ are SOS for each $k=1,\ldots,p_i$. Then $d_i$, $\tau_i=\frac{1}{\theta_i}$, $\phi_i=\frac{1}{\beta_i}$, and $\eta_i$ satisfy Eqn. \eqref{eq:index definition}.
\end{proposition}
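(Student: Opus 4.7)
The plan is to invoke the Positivstellensatz (what control theorists call the generalized S-procedure) on each of the three SOS certificates in Eqn. \eqref{eq:sos} and extract the corresponding polynomial inequality in Eqn. \eqref{eq:index definition}. The workhorse lemma is the standard one: if an expression of the form $p(x) - \sum_j \lambda_j(x) q_j(x)$ is SOS and every multiplier $\lambda_j$ is SOS, then $p(x) \geq 0$ on the basic semialgebraic set $\{x : q_j(x) \geq 0 \text{ for all } j\}$, because on that set the subtracted terms are individually nonnegative. Assumption \ref{assump:semi-algebraic} together with the polynomial hypothesis on $\mu_i$ ensures that every quantity appearing in Eqn. \eqref{eq:index definition} is a polynomial in $(x_i,u_i)$ and that the sets $\mathcal{C}_i$, $\mathcal{D}_i$, $\mathcal{C}_i\setminus\mathcal{D}_i$, and $\mathcal{U}_i$ admit basic semialgebraic descriptions, so the lemma is directly applicable.

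I would treat the three SOS expressions in turn. For \eqref{eq:sos 1}, the multipliers $q(x_i,u_i)$, $w_k(x_i,u_i)$, and $v_k(x_i,u_i)$ are paired with the defining inequalities $h_i^d(x_i)\geq 0$ and $[u]_k-[u]_{k,\min}\geq 0$, $[u]_{k,\max}-[u]_k\geq 0$, which together describe a subset of $\mathcal{C}_i\times\mathcal{U}_i$. Invoking the lemma produces $\frac{\partial h_i^d}{\partial x_i}(x_i)[f_i(x_i)+g_i(x_i)u_i] + d_i\theta_i \geq 0$ on the described set; substituting $\tau_i=1/\theta_i$ and rearranging recovers \eqref{eq:index definition 1}. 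For \eqref{eq:sos 2}, the multipliers $l(x_i)$ and $m(x_i)$ are paired (with appropriate signs) against the description of $\mathcal{C}_i\setminus\mathcal{D}_i=\{h_i(x_i)\geq 0,\,-h_i^d(x_i)\geq 0\}$, and the same argument gives $\frac{\partial h_i^d}{\partial x_i}(x_i)[f_i(x_i)+g_i(x_i)\mu_i(x_i)] \geq d_i\beta_i$ on that region, which after identifying $\phi_i=1/\beta_i$ is precisely \eqref{eq:index definition 2}. Finally, for \eqref{eq:sos 3}, the lone multiplier $r(x_i)$ encodes the constraint $h_i(x_i)\geq 0$, so the lemma yields \eqref{eq:index definition 3} on $\mathcal{C}_i$, which contains $\mathcal{D}_i$ where the condition is required; the extended class $\mathcal{K}$ function $\alpha_i$ appearing in \eqref{eq:index definition 3} is exactly the $\alpha$ that enters \eqref{eq:sos 3}.

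The main obstacle is purely bookkeeping: one must verify that in each SOS expression the signs of the multiplier terms correctly certify the inequalities that define the relevant semialgebraic set, particularly in \eqref{eq:sos 2} where two constraints of opposite orientation must be combined. Once this matching is done, each invocation of the S-procedure lemma is routine, and the conclusion is assembled by collecting the substitutions $\tau_i=1/\theta_i$ and $\phi_i=1/\beta_i$ and observing that the resulting quadruple $(d_i,\tau_i,\phi_i,\eta_i)$ satisfies every clause of \eqref{eq:index definition}; Proposition \ref{prop:resilience index} then certifies that $\mathcal{S}_i$ is $(d_i,\tau_i,\phi_i,\eta_i)$-resilient.
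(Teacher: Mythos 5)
Your proposal is correct and follows essentially the same route as the paper: the paper's proof is exactly this generalized S-procedure argument (SOS expression nonnegative everywhere, SOS multipliers times nonnegative constraint polynomials nonnegative on the semialgebraic set, hence the remaining term is nonnegative there), worked out explicitly only for \eqref{eq:sos 1} with the substitution $\tau_i=1/\theta_i$ and the other two cases declared analogous. You are in fact slightly more thorough than the paper in sketching all three cases and flagging the sign bookkeeping in \eqref{eq:sos 2}, which is the only place where care is genuinely needed.
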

\begin{proof}
We prove that Eqn. \eqref{eq:sos 1} implies Eqn. \eqref{eq:index definition 1}. The other SOS constraints can be proved in a similar manner.
Consider $x_i\in\mathcal{C}_i$ and $[u]_{k,min}\leq [u]_k\leq [u]_{k,max}$ for all $k=1,\ldots,p_i$. 
We thus have that $h_i^d(x_i)\geq 0$ since $\mathcal{D}_i\subseteq\mathcal{C}_i$.
In addition, we have that $[u]_k-[u]_{k,min}\geq 0$, and $[u]_{k,max}-[u]_k\geq 0$. 
When expression \eqref{eq:sos 1} is SOS, $q(x_i,u_i)$, $w_k(x_i,u_i)$, and $v_k(x_i,u_i)$ are SOS for all $k=1,\ldots, p_i$, 
we have that the expression in Eqn. \eqref{eq:sos 1} is non-negative.
Therefore, if expression \eqref{eq:sos 1} is SOS and $\tau_i$ is chosen as $\tau_i=1/\theta_i$, then Eqn. \eqref{eq:index definition 1} holds.
\end{proof}



We observe that the SOS constraints derived in Proposition \ref{prop:sos} are bilinear (see the terms $d_i\theta_i$ and $d_i\beta_i$). Hence the resilience index cannot be readily computed by implementing these SOS constraints.
We overcome this challenge by developing an alternating optimization procedure, as shown in Algorithm \ref{algo:compute index}.
In Algorithm \ref{algo:compute index}, parameters $\tau_{max}$ and $\phi_{min}$ are the upper and lower bounds for $\tau_i$ and $\phi_i$, respectively.
If there exist no bound for $\tau_i$ and $\phi_i$, then parameters $\tau_{max}$ and $\phi_{min}$ can be set as infinity and zero, respectively.
  \begin{center}
  	\begin{algorithm}[!htp]
  		\caption{Algorithm for computing resilience index}
  		\label{algo:compute index}
  		\begin{algorithmic}[1]
  			\State \textbf{Input}: $\mathcal{S}_i$, $\tau_{max}$, $\phi_{min}$, and $\epsilon>0$
  			\State \textbf{Output:} $d_i$, $\tau_i$, $\phi_i$, $\eta_i$
  			\State \textbf{Initialization:} $d_i=0$
  		    \While{$d_i\leq \sup_{x_i}\{h_i(x_i)\}$}
            \State Solve for $\theta_i$, $\beta_i$, $\eta_i$ such that \eqref{eq:sos} is feasible with $d_i$ fixed.
            \If{Eqn. \eqref{eq:sos} is feasible, $\frac{1}{\theta_i} \leq \tau_{max}$, and $\frac{1}{\beta_i} \geq \tau_{min}$}
            \State \textbf{return} $d_i$, $\tau_i=\frac{1}{\theta_i}$, $\phi_i=\frac{1}{\beta_i}$, and $\eta_i$
            \Else
            \State $d_i=d_i+\epsilon$
            \EndIf
            \EndWhile
  		\end{algorithmic}
  	\end{algorithm}
  \end{center}
\section{Resilience Index After Interconnection}\label{sec:interconnection}

In this section, we consider a setting where multiple subsystems, with each being formulated by a hybrid system $\mathcal{H}_i$, are interconnected within a network.
Suppose that a collection of subsystems $\{\mathcal{S}_i\}_{i=1}^N$ are interconnected within a network.
In the network, each subsystem $\mathcal{S}_i$ follows the dynamics as given by Eqn. \eqref{eq:interconnected dynamics}.

We note that the interconnected system can be formulated as a hybrid system $\mathcal{H}$ as defined in Section \ref{subsec:hybrid system}.
In this case, the continuous state space is $\mathcal{X}=\prod_{i=1}^N\mathcal{X}_i$, and the discrete location is $\mathcal{L}=\{\text{\texttt{offline}},\text{\texttt{online}}\}^N$.
We observe that the continuous state is of dimension $n=\sum_{i=1}^Nn_i$. 
Furthermore, the transitions among the discrete locations are the Cartesian product of discrete transitions of all subsystems, which is combinatorial in nature to capture all possible faults and attacks that can be incurred by the subsystems. 
Therefore, safety verification over the hybrid system $\mathcal{H}$ is computationally intractable for large-scale interconnected systems.
In what follows, we derive how the resilience index of each subsystem changes due to interconnections.
We further show how our proposed resilience index can be applied to efficiently verify safety constraints of the interconnected system.

\subsection{Computation of Resilience Index After Interconnection}

In the following, we first characterize the behaviors of any subsystem $\mathcal{S}_j$ when being interconnected.
We define 
\begin{equation}\label{eq:delta def}
\delta_j=\inf_{x}\Big\{\frac{\partial h_j^d}{\partial x_j}\big[\sum_{i\neq j}W_{ij}(x_i,x_j) -\sum_{i\neq j}W_{ji}(x_j,x_i)\big]\Big\}.    
\end{equation}
We will show that the resilience index of a subsystem  $\mathcal{S}_j$ after being interconnected can be bounded using one of the following two sets of inequalities
\begin{align}
    &R_1:
    \begin{cases}
    0\leq d_j'\leq d_j\leq \sup_{x_j}\{h_j(x_j)\}\\
    -\frac{d_j'}{\tau_j'}\leq -\frac{d_j}{\tau_j}+\delta_j \\
    \frac{\phi_j}{d_j+\phi_j\delta_j}d_j'-\phi_j'\leq 0\\
    \eta_j'\leq \delta_j+\min\{\frac{d_j}{\phi_j},\eta_j+\inf_{x_j\in\mathcal{D}_j}\{\alpha_j(h_j(x_j)-d_j')
    \\
    \quad\quad\quad\quad\quad\quad\quad\quad-\alpha_j(h_j(x_j)-d_j)\}\}
    \end{cases}\label{eq:R1}\\
    &R_2:
    \begin{cases}
    0\leq d_j\leq d_j'\leq \sup_{x_j}\{h_j(x_j)\}\\
    -\frac{d_j'}{\tau_j'}\leq -\frac{d_j}{\tau_j}+\delta_j \\
    \frac{d_j'}{\phi_j'}\leq \delta_j+\min\{\frac{d_j}{\phi_j},\inf_{x_j\in\mathcal{D}_j}\{-\alpha_j(h_j^d(x_j))\}+\eta_j\}\\
    \eta_j'\leq \delta_j+\eta_j-\sup_{x_j\in\mathcal{D}_j}\{\alpha_j(h_j(x_j)-d_j')\\
    \quad\quad\quad\quad\quad\quad\quad\quad-\alpha_j(h_j(x_j)-d_j)\}\}
    \end{cases} \label{eq:R2}
\end{align}

We define $\mathcal{D}_j'=\{x_j:h_j(x_j)-d_j'\geq 0\}$. The inequalities in $R_1$ and $R_2$ specify sets $\mathcal{D}_j'$ differently.
The inequalities in $R_1$ specify that $\mathcal{D}_j'\supseteq \mathcal{D}_j$, whereas $R_2$ defines $\mathcal{D}_j''\subseteq\mathcal{D}_j$, which further leads to distinct behaviors when $x_j\in\mathcal{C}_j\setminus\mathcal{D}_j'$.
In what follows, we show how  the behavior of each subsystem following dynamics in Eqn. \eqref{eq:interconnected dynamics} can be characterized by the solutions to  $R_1$ or $R_2$.
This allows us to further verify the safety constraints for the interconnected system.

\begin{theorem}\label{thm:RI interconnect}
Consider that a collection of subsystems $\{\mathcal{S}_j\}_{j=1}^N$ are interconnected, and each $\mathcal{S}_j$ follows dynamics as given in Eqn. \eqref{eq:interconnected dynamics} for all $j=1,\ldots,N$.
We denote their resilience indices before being interconnected as $(d_k,\tau_k,\phi_k,\eta_k)$, where $k=1,\ldots,N$.
Define $\mathcal{D}_j'=\{x_j:h_j(x_j)-d_j'\geq 0\}$.
If parameters $d_j',\tau_j',\phi_j'$, and $\eta_j'$ render either $R_1$ or $R_2$ to be feasible, then the following conditions hold for $\mathcal{S}_j$ after being interconnected:
\begin{subequations}\label{eq:RI definition cascade}
\begin{align}
    &\frac{\partial h_j^d}{\partial x_j}(x_j)[f_j(x_j)+g_j(x_j)u_j + \sum_{i\neq j}W_{ij}(x_i,x_j)\label{eq:RI definition cascade 1}\\
    &\quad-\sum_{i\neq j}W_{ji}(x_j,x_i)]
    \geq -\frac{d_j'}{\tau_j'},~\forall (x_j,u_j)\in\mathcal{C}_j\times\mathcal{U}_j\nonumber\\
    &\frac{\partial h_j^d}{\partial x_j}(x_j)[f_j(x_j)+g_j(x_j)\mu_j(x_j) + \sum_{i\neq j}W_{ij}(x_i,x_j)\label{eq:RI definition cascade 2}\\
    &\quad-\sum_{i\neq j}W_{ji}(x_j,x_i)]\geq \frac{d_j'}{\phi_j'},~\forall x_j\in\mathcal{C}_j\setminus\mathcal{D}_j'\nonumber\\
    &\frac{\partial h_j^d}{\partial x_j}(x_j)[f_j(x_j)+g_j(x_j)\mu_j(x_j) + \sum_{i\neq j}W_{ij}(x_i,x_j)\label{eq:RI definition cascade 3}\\
    &\quad-\sum_{i\neq j}W_{ji}(x_j,x_i)]\geq-\alpha_j(h_j(x_j)-d_j')+\eta_j',~\forall x\in\mathcal{D}_j'\nonumber
\end{align}
\end{subequations}
\end{theorem}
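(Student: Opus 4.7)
The plan is to verify each of the three conditions in \eqref{eq:RI definition cascade} by applying the corresponding inequality of \eqref{eq:index definition} to the isolated subsystem $\mathcal{S}_j$ and then absorbing the interconnection term through the lower bound $\delta_j$ defined in \eqref{eq:delta def}. The key observation is that, by construction, $\frac{\partial h_j^d}{\partial x_j}\bigl[\sum_{i\neq j}W_{ij}(x_i,x_j)-\sum_{i\neq j}W_{ji}(x_j,x_i)\bigr]\geq \delta_j$ for every $x$, so the net effect of the interconnections on $\dot{h}_j^d$ is a single additive constant. Once this is observed, each of the three target inequalities reduces to the question of whether the chosen $(d_j',\tau_j',\phi_j',\eta_j')$ is compatible with the shifted bound, and this compatibility is precisely what the algebraic inequalities in $R_1$ and $R_2$ encode.

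I would begin with \eqref{eq:RI definition cascade 1}, which is immediate: \eqref{eq:index definition 1} gives $\frac{\partial h_j^d}{\partial x_j}[f_j+g_j u_j]\geq -d_j/\tau_j$ on $\mathcal{C}_j\times\mathcal{U}_j$, and adding $\delta_j$ yields a bound of $-d_j/\tau_j+\delta_j$, which is at most $-d_j'/\tau_j'$ by the second inequality common to both $R_1$ and $R_2$. For the remaining two conditions, the crucial distinction is that the post-interconnection set $\mathcal{D}_j'=\{x_j:h_j(x_j)\geq d_j'\}$ need not coincide with $\mathcal{D}_j$: in $R_1$ one has $\mathcal{D}_j\subseteq\mathcal{D}_j'$, while in $R_2$ one has $\mathcal{D}_j'\subseteq\mathcal{D}_j$. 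I would therefore treat \eqref{eq:RI definition cascade 2} and \eqref{eq:RI definition cascade 3} by a case split on where $x_j$ lies relative to $\mathcal{D}_j$ and $\mathcal{D}_j'$. For example, to establish \eqref{eq:RI definition cascade 3} under $R_1$, I would partition $\mathcal{D}_j'$ into $\mathcal{D}_j$ and $\mathcal{D}_j'\setminus\mathcal{D}_j\subseteq\mathcal{C}_j\setminus\mathcal{D}_j$, apply \eqref{eq:index definition 3} plus $\delta_j$ on the first piece (giving the $\inf_{x_j\in\mathcal{D}_j}$ term involving $\alpha_j$), and apply \eqref{eq:index definition 2} plus $\delta_j$ on the second piece while using $\alpha_j(h_j(x_j)-d_j')\geq 0$ on $\mathcal{D}_j'$ (giving the $d_j/\phi_j$ clause). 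The resulting combined sufficient condition matches the last line of $R_1$. A mirror-image split handles \eqref{eq:RI definition cascade 2} under $R_2$, with $\mathcal{C}_j\setminus\mathcal{D}_j'$ partitioned into $\mathcal{C}_j\setminus\mathcal{D}_j$ and $\mathcal{D}_j\setminus\mathcal{D}_j'$, yielding the $\min$ in the third line of $R_2$. The two remaining cases, \eqref{eq:RI definition cascade 2} under $R_1$ and \eqref{eq:RI definition cascade 3} under $R_2$, do not require a split because the relevant monotone inclusion sends the target set into a region where a single clause of \eqref{eq:index definition} already applies.

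The main obstacle is bookkeeping with the extended class-$\mathcal{K}$ function $\alpha_j$: what is available from \eqref{eq:index definition 3} is $-\alpha_j(h_j-d_j)$, but what must be produced is $-\alpha_j(h_j-d_j')$. The correction term is $\alpha_j(h_j-d_j')-\alpha_j(h_j-d_j)$, whose sign depends on whether $d_j'\leq d_j$ (case $R_1$, correction nonnegative) or $d_j'\geq d_j$ (case $R_2$, correction nonpositive). Since the inequality must hold uniformly over $x_j$, one obtains an infimum/supremum of this correction on $\mathcal{D}_j$ or $\mathcal{D}_j'$, which produces precisely the $\alpha_j$-dependent terms in the last inequalities of $R_1$ and $R_2$. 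Carefully tracking which set the quantifier ranges over, and ensuring the direction of the inequality is preserved in each branch of the case split, is the only delicate part of the argument; everything else is a routine rearrangement of the three inequalities in \eqref{eq:index definition} after adding $\delta_j$.
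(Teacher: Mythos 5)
Your proposal is correct and follows essentially the same route as the paper: lower-bound the net interconnection contribution to $\dot h_j^d$ by $\delta_j$ from \eqref{eq:delta def}, then verify each condition of \eqref{eq:RI definition cascade} via the same case splits on the relative position of $\mathcal{D}_j$ and $\mathcal{D}_j'$ (splitting $\mathcal{D}_j'$ into $\mathcal{D}_j$ and $\mathcal{D}_j'\setminus\mathcal{D}_j$ under $R_1$, and $\mathcal{C}_j\setminus\mathcal{D}_j'$ into $\mathcal{C}_j\setminus\mathcal{D}_j$ and $\mathcal{D}_j\setminus\mathcal{D}_j'$ under $R_2$), closing each branch with the matching inequality of \eqref{eq:index definition} and the corresponding algebraic condition in $R_1$ or $R_2$. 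Your handling of the $\alpha_j$ correction term and its sign in the two regimes matches the paper's argument as well.
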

\begin{proof}
We first verify that if $d_j',\tau_j',\phi_j'$, and $\eta_j'$ satisfy $R_1$, then Eqn. \eqref{eq:RI definition cascade} holds.
We denote $\dot{h}_j^d$ as 
\begin{multline*}
    \dot{h}_j^d = \frac{\partial h_j^d}{\partial x_j}(x_j)[f_j(x_j)+g_j(x_j)\mu_j(x_j) \\
    + \sum_{i\neq j}W_{ij}(x_i,x_j) -\sum_{i\neq j}W_{ji}(x_j,x_i)].
\end{multline*}
Suppose that $d_j',\tau_j',\phi_j'$, and $\eta_j'$ yield $R_1$ to be feasible. 
In this case, we have that $\mathcal{D}_j=\{x_j:h_j(x_j)-d_j\geq 0\}\subseteq \mathcal{D}_j'=\{x_j:h_j(x_j)-d_j'\geq 0\}$ due to $d_j'\leq d_j$. 
When $\mathcal{H}_j$ is at location \texttt{offline}, we have
\begin{align}
    \dot{h}_j^d\geq& \frac{\partial h_j^d}{\partial x_j}(x_j)[f_j(x_j)+g_j(x_j)u_j]+\delta_j\label{eq:cascade RI R1-1}\\
    \geq & -\frac{d_j}{\tau_j}+\delta_j
    \geq -\frac{d_j'}{\tau_j'},~\forall (x_j,u_j)\in\mathcal{C}_j\times\mathcal{U}_j\label{eq:cascade RI R1-3}
\end{align}
where inequality \eqref{eq:cascade RI R1-1} holds Eqn. \eqref{eq:delta def}, the second inequality holds by Eqn. \eqref{eq:index definition 1}, and the last inequality holds by the assumption that $R_1$ is feasible under $(d_j',\tau_j',\phi_j',\eta_j')$.

Consider the case where hybrid system $\mathcal{H}_j$ is at location \texttt{online} and control law $\mu_j(x_j)$ is available. We have
\begin{equation}\label{eq:cascade RI R1-6}
    \dot{h}_j^d\geq  \frac{d_j}{\phi_j}+\delta_j\geq  \frac{d_j'}{\phi_j'}
\end{equation}
holds for all $x_j\in\mathcal{C}_j\setminus\mathcal{D}_j'$, where the first inequality holds by Eqn. \eqref{eq:index definition 2} and \eqref{eq:delta def}, and the second inequality holds by $(\mathcal{C}_j\setminus\mathcal{D}_j')\subseteq (\mathcal{C}_j\setminus\mathcal{D}_j)$ given the feasibility of $R_1$.

We finally consider the case where $x_j\in\mathcal{D}_j'$ by dividing our discussion into two scenarios.
When $x_j\in\mathcal{D}_j'\setminus\mathcal{D}_j$, Eqn. \eqref{eq:cascade RI R1-6} yields $\dot{h}_j^d
    \geq  \frac{d_j}{\phi_j} + \delta_j \geq \eta_j'$,
where the last inequality holds by the feasibility of $R_1$.
When $x_j\in\mathcal{D}_j\subseteq\mathcal{D}_j'$, we have that
\begin{align}\label{eq:cascade RI R1-8}
    &\dot{h}_j^d
    \geq  -\alpha(h_j^d(x_j))+\eta_j + \delta_j\geq -\alpha(h_j(x_j)-d_j')+\eta_j' 
\end{align}
holds for all $x_j\in\mathcal{D}_j$, where the last inequality holds by the feasibility of $R_1$, i.e., $\eta_j'\leq \delta_j+\eta_j+\inf_{x_j\in\mathcal{D}_j}\{\alpha(h_j(x_j)-d_j')-\alpha(h_j(x_j)-d_j)\}$.

We next verify that if $d_j',\tau_j',\phi_j'$, and $\eta_j'$ satisfy $R_2$, then Eqn. \eqref{eq:RI definition cascade} holds. 
Note that in this case, $\mathcal{D}_j'=\{x_j:h_j(x_j)-d_j'\geq 0\}\subseteq \mathcal{D}_j=\{x_j:h_j(x_j)-d_j\geq 0\}$. 
When hybrid system $\mathcal{H}_j$ is at location \texttt{offline}, Eqn. \eqref{eq:RI definition cascade 1} can be derived using Eqn. \eqref{eq:cascade RI R1-1} and \eqref{eq:cascade RI R1-3} given that $d_j',\tau_j',\phi_j'$, and $\eta_j'$ satisfy $R_2$.
We next consider that hybrid system $\mathcal{H}_j$ is at location \texttt{online}.
We discuss two possible scenarios that can occur when $x_j\in\mathcal{C}_j\setminus\mathcal{D}_j'$. 
If $x_j\in \mathcal{C}_j\setminus\mathcal{D}_j$, we have
\begin{equation}\label{eq:cascade RI R2-1}
    \dot{h}_j^d
    \geq \frac{d_j}{\phi_j}+\delta_j
    \geq \frac{d_j'}{\phi_j'},~\forall x_j\in\mathcal{C}_j\setminus\mathcal{D}_j
\end{equation}
where the first inequality holds by Eqn. \eqref{eq:index definition 2} and the definition of $\delta_j$, and the second inequality holds by the feasibility of $R_2$.
If $x_j\in \mathcal{D}_j\setminus\mathcal{D}_j'$, we have that
\begin{equation}\label{eq:cascade RI R2-2}
    \dot{h}_j^d
    \geq -\alpha(h_j^d(x_j))+\delta_j
    \geq \frac{d_j'}{\phi_j'},~\forall x_j\in\mathcal{D}_j\setminus\mathcal{D}_j',
\end{equation}
where the first inequality holds by Eqn. \eqref{eq:index definition 3} and the definition of $\delta_j$, and the second inequality holds by the feasibility of $R_2$.
Combining Eqn. \eqref{eq:cascade RI R2-1} and \eqref{eq:cascade RI R2-2} yields Eqn. \eqref{eq:RI definition cascade 2}.

We finally consider that $x_j\in\mathcal{D}_j'$. We have that
\begin{multline*}\label{eq:cascade RI R2-3}
    \dot{h}_j^d
    \geq  -\alpha(h_j^d(x_j))+\eta_j + \delta_j\geq -\alpha(h_j(x_j)-d_j')+\eta_j'
\end{multline*}
holds for all $x_j\in\mathcal{D}_j'$, where the first inequality holds by Eqn. \eqref{eq:index definition 3} and the definition of $\delta_j$, and the second inequality holds by $\mathcal{D}_j'\subseteq\mathcal{D}_j$ along with the feasibility of $R_2$.

Combining the discussion above completes the proof.
\end{proof}

We observe that when function $\alpha_j$ is linear, computing the resilience indices after interconnection reduces to solving a linear system.
In the following, we show that given the resilience indices of $\mathcal{S}_j$ before it is interconnected along with its control law $\mu_i$, we can efficiently quantify how its resilience index changes due to interconnections by solving a set of inequalities given in Eqn. \eqref{eq:R1} and \eqref{eq:R2}.
\begin{theorem}\label{thm:safety interconnect}
Consider that a collection of subsystems $\{\mathcal{S}_j\}_{j=1}^N$ are interconnected, and each $\mathcal{S}_j$ follows dynamics as given in Eqn. \eqref{eq:interconnected dynamics} for all $j=1,\ldots,N$.
We denote their resilience indices before being interconnected as $(d_k,\tau_k,\phi_k,\eta_k)$, where $k=1,\ldots,N$. 
If parameters $d_j',\eta_j'\geq 0$ and $\tau_j',\phi_j'>0$ satisfy either $R_1$ in Eqn. \eqref{eq:R1} or $R_2$ in Eqn. \eqref{eq:R2}, then $\mathcal{S}_j$ is $(d_j',\tau_j',\phi_j',\eta_j')$-resilient under dynamics \eqref{eq:interconnected dynamics}.
Furthermore, $\mathcal{S}_j$ is safe with respect to $\mathcal{C}_j$ after being interconnected within the network.
\end{theorem}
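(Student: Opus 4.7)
The plan is to derive Theorem 2 as essentially a direct composition of Theorem 1 and Proposition 1, since the heavy lifting (showing that the interconnection terms can be absorbed by adjusting the resilience parameters) has already been done in Theorem 1.

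First, I would invoke Theorem 1: assuming that $(d_j',\tau_j',\phi_j',\eta_j')$ renders either $R_1$ or $R_2$ feasible, Theorem 1 immediately gives that the three inequalities in Eqn.~\eqref{eq:RI definition cascade} hold for subsystem $\mathcal{S}_j$ under the interconnected dynamics \eqref{eq:interconnected dynamics}. This step requires no new argument, only a citation of Theorem 1.

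Next, I would observe that Eqn.~\eqref{eq:RI definition cascade} is structurally identical to Eqn.~\eqref{eq:index definition} of Proposition 1, with the vector field $f_j(x_j)+g_j(x_j)u_j$ replaced by the interconnected vector field $f_j(x_j)+g_j(x_j)u_j + \sum_{i\neq j}W_{ij}(x_i,x_j) - \sum_{i\neq j}W_{ji}(x_j,x_i)$ and with $(d_j,\tau_j,\phi_j,\eta_j)$ replaced by $(d_j',\tau_j',\phi_j',\eta_j')$ (together with the set $\mathcal{D}_j'$ in place of $\mathcal{D}_j$ and the class-$\mathcal{K}$ function $\alpha_j(\cdot - d_j')$ in place of $\alpha_j(\cdot - d_j)$). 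Since the hybrid-system structure encoded by Definition~\ref{def:resilience index} depends only on the vector field governing the continuous evolution and the parameters $(d_j',\tau_j',\phi_j',\eta_j')$, the four bullet points in Definition~\ref{def:resilience index} are exactly the consequences established in the proof of Proposition~\ref{prop:resilience index} from the three inequalities in~\eqref{eq:index definition}. Therefore the conclusions of Proposition~\ref{prop:resilience index} apply verbatim, with the interconnected vector field playing the role of the isolated one, yielding both that $\mathcal{S}_j$ is $(d_j',\tau_j',\phi_j',\eta_j')$-resilient and that $x_j(t)\in\mathcal{C}_j$ for all $t\geq 0$.

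I do not expect a serious obstacle: the proof is really a chaining exercise, where Theorem 1 is the workhorse and Proposition 1 is reused with a relabeled vector field. The only mild subtlety to state carefully is that the proof of Proposition~\ref{prop:resilience index} is written for the isolated dynamics $f_i(x_i)+g_i(x_i)u_i$, and I must point out that it uses these dynamics only through the three inequalities~\eqref{eq:index definition}; once~\eqref{eq:RI definition cascade} is established by Theorem~1, the same contradiction argument (splitting into the cases $x_j\in\mathcal{C}_j\setminus\mathcal{D}_j'$ and $x_j\in\mathcal{D}_j'$ and using continuity of the trajectory together with the fact that $\dot h_j^d>0$ on the boundary $h_j^d=0$) carries over without change. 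Hence the final write-up should be short, stating the two-step reduction and then explicitly noting that the resilience conditions and the safety conclusion follow by applying Proposition~\ref{prop:resilience index} to the interconnected dynamics through the inequalities guaranteed by Theorem~\ref{thm:RI interconnect}.
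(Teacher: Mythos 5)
Your proposal matches the paper's own proof exactly: the paper likewise cites Theorem~\ref{thm:RI interconnect} to establish Eqn.~\eqref{eq:RI definition cascade} and then applies Proposition~\ref{prop:resilience index} together with Definition~\ref{def:resilience index} to the interconnected dynamics to conclude resilience and safety. Your added remark that Proposition~\ref{prop:resilience index} uses the dynamics only through the three inequalities is a correct and slightly more careful justification of the same two-step reduction.
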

\begin{proof}
    When parameters $d_j',\tau_j',\phi_j'$, and $\eta_j'$ render either $R_1$ or $R_2$ to be feasible, we have that Eqn. \eqref{eq:RI definition cascade} holds by using Theorem \ref{thm:RI interconnect}.
    By using Proposition \ref{prop:resilience index} and Definition \ref{def:resilience index}, we have that when $d_j',\eta_j'\geq 0$ and $\tau_j',\phi_j'>0$, subsystem $\mathcal{S}_j$ is $(d_j',\tau_j',\phi_j',\eta_j')$-resilient and safe with respect to $\mathcal{C}_j$.
\end{proof}

Computing $\delta_j$ for each subsystem $\mathcal{S}_j$ requires to solve an optimization problem over joint system state $x\in\mathbb{R}^n$, where $n=\sum_{i=1}^Nn_i$. 
To alleviate the computations in high dimensional state space, we approximate parameter $\delta_j$ as 
\begin{equation*}
    \delta_j\geq \sum_{i\neq j}\inf_{(x_i,x_j)\in\mathcal{C}_i\times\mathcal{C}_j}\{W_{ij}(x_i,x_j)-W_{ji}(x_j,x_i)\}:=\Tilde{\delta}_j.
\end{equation*}
By replacing $\delta_j$ with $\Tilde{\delta}_j$ in Eqn. \eqref{eq:RI definition cascade}, we can apply similar approach to show that Theorem \ref{thm:RI interconnect} and \ref{thm:safety interconnect} still hold.

\subsection{Feasibility of Resilience Index for Interconnected System}

Consider an interconnected system consisting of $N$ subsystems. 
We need to determine whether the inequalities in $R_1$ or $R_2$ need to be solved to apply Theorem \ref{thm:RI interconnect} and \ref{thm:safety interconnect} for safety verification of the interconnected system.
One approach is to combine the inequalities in $R_1$ and $R_2$ by using a set of mixed integer constraints and big M-method \cite{boyd2004convex}, where the integer variable $y\in\{0,1\}$ models whether $R_1$ or $R_2$ is solved.
In this subsection, we show that we can determine whether $R_1$ or $R_2$ is feasible given the value of $\delta_j$, and hence avoid solving the mixed integer program.

\begin{theorem}\label{thm:feasibility}
  Consider a subsystem $\mathcal{S}_j$ whose resilience index is given as $(d_j,\tau_j,\phi_j,\eta_j)$ before being interconnected.
  Assume that $\alpha_j(h_j(x_j))=zh_j(x_j)$ for some coefficient $z>0$.
  If $\delta_j$ satisfies 
  \begin{equation}\label{eq:R1 feasibility}
      \delta_j\geq \max\{-\frac{d_j}{\phi_j},-\eta_j-zd_j\},
  \end{equation}
  then there exist $d_j',\eta_j'\geq 0$ and $\tau_j',\phi_j'>0$ such that the inequalities in $R_1$ are satisfied.
  If $\delta_j$ satisfies 
  \begin{equation}\label{eq:R2 feasibility}
      \delta_j\geq \max\{-\frac{d_j}{\phi_j},-\eta_j+z(\sup_{x_j}\{h_j(x_j)\}-d_j)\},
  \end{equation}
  then there exist $d_j',\eta_j'\geq 0$ and $\tau_j',\phi_j'>0$ such that the inequalities in $R_2$ are satisfied.
\end{theorem}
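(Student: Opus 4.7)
The plan is to establish each part by explicitly constructing a witness quadruple $(d_j',\tau_j',\phi_j',\eta_j')$ that makes $R_1$ (respectively $R_2$) feasible. The linearity assumption $\alpha_j(h)=zh$ is the key enabler: it turns the $\inf$ and $\sup$ terms appearing in Eqn.~\eqref{eq:R1}--\eqref{eq:R2} into explicit algebraic quantities. Concretely, $\alpha_j(h_j(x_j)-d_j')-\alpha_j(h_j(x_j)-d_j)=z(d_j-d_j')$ is independent of $x_j$, and $\inf_{x_j\in\mathcal{D}_j}\{-\alpha_j(h_j^d(x_j))\}=-z(\sup_{x_j}\{h_j(x_j)\}-d_j)$. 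After this substitution, $R_1$ and $R_2$ become finite-dimensional algebraic systems in $(d_j',\tau_j',\phi_j',\eta_j')$.

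For part~(1), the approach is to set $\eta_j'=0$ and pick $d_j'\in(0,d_j]$. With the simplification above, the last inequality of $R_1$ becomes $0\leq\delta_j+\min\{d_j/\phi_j,\eta_j+z(d_j-d_j')\}$. If $-d_j/\phi_j\geq -\eta_j-zd_j$ (so the first branch of the min is the active lower bound), pick $d_j'=d_j$ and use $\delta_j\geq-d_j/\phi_j$; otherwise ($-\eta_j-zd_j$ is binding) shrink $d_j'$ towards $0$ and use $\delta_j\geq -\eta_j-zd_j$ so that $\eta_j+z(d_j-d_j')\geq -\delta_j$. In either case $d_j'$ can be kept strictly positive. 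With $d_j'>0$, the second inequality of $R_1$ reduces to $\tau_j'\leq d_j'/(d_j/\tau_j-\delta_j)$ whenever $d_j/\tau_j-\delta_j>0$ (vacuous otherwise), which admits a positive $\tau_j'$; and the third inequality $\phi_j'\geq \phi_j d_j'/(d_j+\phi_j\delta_j)$ has a positive right-hand side under $\delta_j\geq -d_j/\phi_j$, so a positive $\phi_j'$ can be chosen.

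For part~(2), take $d_j'=d_j$ as a tight choice of the lower bound in $R_2$. Then $z(d_j-d_j')=0$, so the fourth inequality of $R_2$ reduces to $\eta_j'\leq \delta_j+\eta_j$, which admits $\eta_j'=0$ since the hypothesis $\delta_j\geq -\eta_j+z(\sup_{x_j}\{h_j(x_j)\}-d_j)\geq-\eta_j$ (using $\sup h_j\geq d_j$) suffices. The third inequality of $R_2$ becomes $d_j/\phi_j'\leq \delta_j+\min\{d_j/\phi_j,\eta_j-z(\sup_{x_j}\{h_j(x_j)\}-d_j)\}$, whose right-hand side is non-negative by the two-branch hypothesis on $\delta_j$, delivering a positive $\phi_j'$. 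The second inequality is satisfied by the same $\tau_j'$-construction as in part~(1), and the first is immediate.

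The main obstacle is the degenerate case in which the hypothesis on $\delta_j$ holds with equality. When $\delta_j=-d_j/\phi_j$, the denominator $d_j+\phi_j\delta_j$ in the third $R_1$ bound vanishes, forcing $d_j'=0$, which then clashes with the second inequality when $d_j/\tau_j>0$; a similar tension appears at the second threshold for $R_2$. I would handle this either by treating the hypothesis as strict on the active branch (and obtaining the equality case via a continuity/limit argument), or by splitting off the special subcase $d_j=0$, where $\mathcal{D}_j=\mathcal{C}_j$ and most of the inequalities become vacuous and are trivially satisfied. Apart from this boundary care, the proof is a direct case split on which branch of the max in the hypothesis is active, combined with the algebraic simplifications provided by the linearity of $\alpha_j$.
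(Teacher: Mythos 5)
Your proposal is correct in substance but takes a genuinely different route from the paper. The paper rewrites $R_1$ (and $R_2$) as a system $A_jv_j\leq b_j$ in $v_j=[d_j',\tau_j',\phi_j',\eta_j']^\top$ and invokes Farkas' lemma: under Eqn.~\eqref{eq:R1 feasibility} the dual alternative is excluded, so a nonnegative solution exists. You instead exhibit an explicit witness, using linearity of $\alpha_j$ to collapse the $\inf$/$\sup$ terms to $z(d_j-d_j')$ and then solving the resulting algebraic system directly. Your route is more elementary, yields a concrete resilience index rather than a bare existence claim, and honestly confronts the fact that the constraints involving $d_j'/\tau_j'$ and $d_j'/\phi_j'$ are not actually linear in $(\tau_j',\phi_j')$ --- a point the paper's matrix reformulation glosses over.

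Two caveats. First, in part~(1) the branch ``pick $d_j'=d_j$ when $-d_j/\phi_j\geq-\eta_j-zd_j$'' does not work as stated: with $d_j'=d_j$ the second argument of the min is $\eta_j$, and $\delta_j\geq-d_j/\phi_j$ does not imply $\delta_j\geq-\eta_j$. The uniform choice $d_j'=\min\{d_j,(\delta_j+\eta_j+zd_j)/z\}$, which is nonnegative by hypothesis, repairs this without any case split. Second, the degenerate case you flag is not merely an obstacle to your argument but a defect of the statement itself: if $\delta_j=-\eta_j-zd_j>-d_j/\phi_j$ with $d_j>0$, the fourth inequality of $R_1$ forces $d_j'=0$, and then the second inequality reads $0\leq-d_j/\tau_j+\delta_j<0$, so $R_1$ is genuinely infeasible. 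A continuity/limit argument cannot rescue this; the hypothesis must be taken strict on the active branch (the paper's Farkas argument silently skips this, checking only that $b_j\geq0$ rather than the full Farkas alternative). With those two repairs your construction is a complete, and arguably more informative, proof than the one in the paper.
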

\begin{proof}
    Suppose that Eqn. \eqref{eq:R1 feasibility} holds. 
    We rewrite $R_1$ in the matrix form $A_jv_j\leq b_j$, where $v_j=[d_j',\tau_j',\phi_j',\eta_j']^\top$, 
    \begin{equation*}
        A = \begin{bmatrix}
        1 &0 &0 &0\\
        1 &-\frac{d_j}{\phi_j}+\delta_j &0 &0\\
        z &0 &0 &1\\
        -1 &0 &\frac{d_j}{\tau_j}-\delta_j &0\\
        0 &1 &1 &1
        \end{bmatrix}, 
        b = \begin{bmatrix}
        d_j\\
        0\\
        \delta_j+\eta_j+zd_j\\
        0\\
        \delta_j+\frac{d_j}{\phi_j}
        \end{bmatrix}
    \end{equation*}
    When Eqn. \eqref{eq:R1 feasibility} holds, we have that there exists no $r_j\geq 0$ such that $b_j^\top r_j<0$.
    Using Farkas' Lemma \cite{fang1993linear}, there must exist some $v_j\geq 0$ such that $A_jv_j\leq b_j$, and thus satisfies $R_1$.
    Similar proof technique can be applied to show that when Eqn. \eqref{eq:R2 feasibility} holds, there must exist non-negative $(d_j',\tau_j',\phi_j',\eta_j')$ such that $R_2$ is satisfied.
    Noticing that $\phi_j',\eta_j'=0$ will make $A_j$ and $b_j$ ill-defined completes our proof.
\end{proof}

Using Theorem \ref{thm:feasibility}, we can decide whether we need to solve the inequalities given by $R_1$ or $R_2$ according to the value of $\delta_j$. 
Therefore, we mitigate the computational complexity by solving $N$ sets of inequalities.
By observing that $-\eta_j-zd_j\leq -\eta_j+z(\sup_{x_j}\{h_j(x_j)-d_j)\})$, we further have that if there exist non-negative $(d_j',\tau_j',\phi_j',\eta_j')$ such that $R_2$ is satisfied, then there must also exist some non-negative solution to the inequalities in $R_1$.
Finally, the sign of $\delta_j$ can be used to reason whether interconnections improve the resilience.

\begin{proposition}\label{prop:positive delta}
Consider a subsystem $\mathcal{S}_j$ whose resilience index is given as $(d_j,\tau_j,\phi_j,\eta_j)$ before being interconnected.
If $\delta_j\geq 0$, then there exists a resilience index $(d_j',\tau_j',\phi_j',\eta_j')$ such that the set of inequalities given by $R_1$ is feasible.
Furthermore, the interconnections improve the resilience of $\mathcal{S}_j$ in the sense that
    \begin{equation*}
        d_j'\leq d_j,\quad \tau_j'=\tau_j, \quad \phi_j'\leq \phi_j,\quad \eta_j'\geq \eta_j.
    \end{equation*}
\end{proposition}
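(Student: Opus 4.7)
The plan is to construct an explicit witness $(d_j', \tau_j', \phi_j', \eta_j')$ for $R_1$ that simultaneously realizes the four improvement bounds, using the nonnegativity of $\delta_j$ as the sole source of slack. Since the proposition asserts only existence, a single well-chosen tuple suffices.

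I would start by setting $\tau_j' = \tau_j$ and $d_j' = d_j$. Under this choice, the first inequality of $R_1$ reduces to the assumption $d_j \leq \sup_{x_j} h_j(x_j)$ that is already part of the original resilience index, while the second inequality collapses to $\delta_j \geq 0$, which is exactly the hypothesis. For the third inequality, I would choose $\phi_j' = \phi_j d_j / (d_j + \phi_j \delta_j)$, which saturates the constraint; because $\delta_j \geq 0$ the denominator is at least $d_j$, so the improvement bound $\phi_j' \leq \phi_j$ holds automatically.

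The core of the argument is the fourth inequality. With $d_j' = d_j$ the infimum $\inf_{x_j \in \mathcal{D}_j}\{\alpha_j(h_j(x_j)-d_j') - \alpha_j(h_j(x_j)-d_j)\}$ vanishes, and the upper bound on $\eta_j'$ reduces to $\delta_j + \min\{d_j/\phi_j,\, \eta_j\}$. When $\eta_j \leq d_j/\phi_j$ I would simply pick $\eta_j' = \eta_j + \delta_j$, which lies within the bound and trivially satisfies $\eta_j' \geq \eta_j$. In the complementary regime $\eta_j > d_j/\phi_j$, I would perturb $d_j'$ slightly downward from $d_j$; strict monotonicity of the extended class $\mathcal{K}$ function $\alpha_j$ then activates a strictly positive infimum term, which raises the upper bound enough to admit $\eta_j' = \eta_j$.

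The main obstacle will be verifying that the downward perturbation in the second regime can be carried out without breaking the other three inequalities, in particular the second one, which tightens as $d_j'$ decreases (its slack is exactly $\tau_j \delta_j$). A continuity argument in $d_j'$ over the admissible interval $[\max\{0, d_j - \tau_j\delta_j\},\, d_j]$, combined with the fact that the third inequality can be re-tuned by a corresponding slight increase in $\phi_j'$ while staying below $\phi_j$, should deliver a valid choice. Once feasibility of $R_1$ is established, the improvement bounds $d_j' \leq d_j$, $\tau_j' = \tau_j$, $\phi_j' \leq \phi_j$, and $\eta_j' \geq \eta_j$ follow immediately by construction.
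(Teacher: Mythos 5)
Your overall strategy---exhibiting an explicit witness tuple---is the same as the paper's, and your choices $\tau_j'=\tau_j$, $d_j'=d_j$, and $\phi_j'=\phi_j d_j/(d_j+\phi_j\delta_j)$ dispose of the first three inequalities of $R_1$ correctly. In fact they are more consistent with $R_1$ than the paper's own construction, which sets $d_j'=d_j+\delta_j$ and thereby violates both the constraint $d_j'\leq d_j$ appearing in $R_1$ and the advertised improvement $d_j'\leq d_j$ whenever $\delta_j>0$.

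The genuine gap is in your handling of the fourth inequality in the regime $\eta_j>d_j/\phi_j$. The upper bound there is $\delta_j+\min\bigl\{d_j/\phi_j,\;\eta_j+\inf_{x_j\in\mathcal{D}_j}\{\alpha_j(h_j(x_j)-d_j')-\alpha_j(h_j(x_j)-d_j)\}\bigr\}$, and the first argument of the $\min$, namely $d_j/\phi_j$, does not depend on $d_j'$ at all. Perturbing $d_j'$ downward only inflates the second argument; the bound remains capped at $\delta_j+d_j/\phi_j$. Consequently, if $\eta_j>\delta_j+d_j/\phi_j$, no choice of $d_j'$ admits $\eta_j'=\eta_j$, and your continuity argument cannot close this case. (In the sub-regime $d_j/\phi_j<\eta_j\leq\delta_j+d_j/\phi_j$ no perturbation is needed: $d_j'=d_j$ with $\eta_j'=\eta_j$ already works, since the infimum term is zero and $\inf\geq-\delta_j$ suffices for the second argument.) To repair the argument you must either add the hypothesis $\eta_j\leq d_j/\phi_j+\delta_j$ (satisfied in the paper's case study, where $\eta_j=0$), or observe that with $d_j'=d_j$ the set $\mathcal{D}_j'\setminus\mathcal{D}_j$ is empty, so the $d_j/\phi_j$ cap---which enters $R_1$ only to control $\dot{h}_j^d$ on that set in the proof of Theorem \ref{thm:RI interconnect}---is vacuous and the constraint effectively relaxes to $\eta_j'\leq\delta_j+\eta_j$. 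For what it is worth, the paper's proof never establishes $\eta_j'\geq\eta_j$ either (it only shows $\eta_j'\geq 0$), so the difficulty lies partly in the proposition as stated; but the specific mechanism you propose for the second regime does not work as written.
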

\begin{proof}
    We prove the proposition by giving a choice of non-negative $(d_j',\tau_j',\phi_j',\eta_j')$ that satisfies $R_1$.
    We first note that $\tau_j'=\tau_j\geq 0$ is valid choice for parameter $\tau_j'$.
    In this case, if $\delta_j\geq 0$, we have that $d_j'$ can be chosen as $d_j'=d_j+\delta_j\geq d_j\geq 0$. 
    Given this choice of $d_j'$, we have that $\phi_j'=\frac{d_j'}{d_j+\phi_j\delta_j}\phi_j\geq 0$.
    Since $\alpha$ is an extended class $\mathcal{K}$ function, we have that $\alpha(h_j(x_j)-d_j')\geq \alpha(h_j(x_j)-d_j)$ for all $x_j$.
    Therefore, $\inf_{x_j\in\mathcal{D}_j}\{\alpha(h_j(x_j)-d_j')- \alpha(h_j(x_j)-d_j)\}\geq 0$.
    We can thus choose $\eta_j'$ as $\eta_j'=\min\{\frac{d_j}{\phi_j},\inf_{x_j\in\mathcal{D}_j}\{\alpha(h_j(x_j)-d_j')- \alpha(h_j(x_j)-d_j)\}\}
        +\delta_j\geq 0$.
    Hence, $(d_j',\tau_j',\phi_j',\eta_j')$ is a valid resilience index for $\mathcal{S}_j$ after interconnection.
\end{proof}

\section{Case Study}\label{sec:case study}

In this section, we demonstrate how the proposed resilience index can be used to analyze safety constraints of interconnected systems.


We consider two well-mixed, nonisothermal continuous stirred-tank reactors (CSTRs), denoted as $\mathcal{S}_1$ and $\mathcal{S}_2$.
We assume that three parallel elementary irreversible exothermic reactions of the form $A\xrightarrow{r_1} B$, $B\xrightarrow{r_2} E$, and $A\xrightarrow{r_3}Q$ occurs, where $A$ is the reactant species, $B$ is the desired product, and $E$ as well as $Q$ are undesired byproducts.
The states of $\mathcal{S}_1$ and $\mathcal{S}_2$ are denoted as $x_1=[T_1,c_1]^\top$ and $x_2=[T_2,c_2]^\top$, where $T_i$ and $c_i$ respectively represent temperature of the reactor and concentration of $\mathcal{S}_i$ with $i\in\{1,2\}$.
Each CSTR utilizes a jacket to remove or provide heat to the reactor to control the chemical reaction.

When the CSTRs are interconnected in series, their dynamics are given as
\begin{align}
    \dot{x}_1&=\begin{bmatrix}
    \frac{F_{e,1}}{V_1}(T_{0,1}-T_1) - \sum_{r=1}^3\frac{H_r}{\rho p}R_r(c_1,T_1)\\
    \frac{F_{e,1}}{V_i}(c_{0,1}-c_1) - \sum_{r=1}^3R_r(c_1,T_1)
    \end{bmatrix} + \begin{bmatrix}
        \frac{u_1}{\rho p V_1}\\
        0
    \end{bmatrix} \label{eq:chemical 1}\\
    \dot{x}_2&=\begin{bmatrix}
    \frac{F_{e,2}}{V_2}(T_{0,2}-T_2) - \sum_{r=1}^3\frac{H_r}{\rho p}R_r(c_2,T_2)\\
    \frac{F_{e,2}}{V_2}(c_{0,2}-c_2) - \sum_{r=1}^3R_r(c_2,T_2)
    \end{bmatrix} + \begin{bmatrix}
        \frac{u_1}{\rho p V_1}\\
        0
    \end{bmatrix}\nonumber\\
    &\quad\quad\quad + \frac{F_1}{V_2}\begin{bmatrix}
        T_1-T_2\\
        c_1-c_2
    \end{bmatrix}\label{eq:chemical 2}
\end{align}
where $F_{e,i}$ is the flow rate, $R_r(c_i,T_i)=k_i\exp(-E_i/lT_i)c_i$, $V_i$ is the volume, $u_i$ is the rate of heat input/ removal.
We follow the choices of process parameters given in \cite{el2005fault} and summarize them in Table \ref{tab:values}.
The set of admissible inputs is set as $|u_1|\leq 2.7\times 10^6$~KJ/hr and $|u_2|\leq 2.8\times 10^6$~KJ/hr.
In Eqn. \eqref{eq:chemical 2}, we have that
\begin{equation*}
    W_{12}(x_1,x_2) = \frac{F_1}{V_2}\begin{bmatrix}
        T_1-T_2\\
        c_1-c_2
    \end{bmatrix}.
\end{equation*}

\begin{figure}
    \centering
    \includegraphics[scale = 0.42]{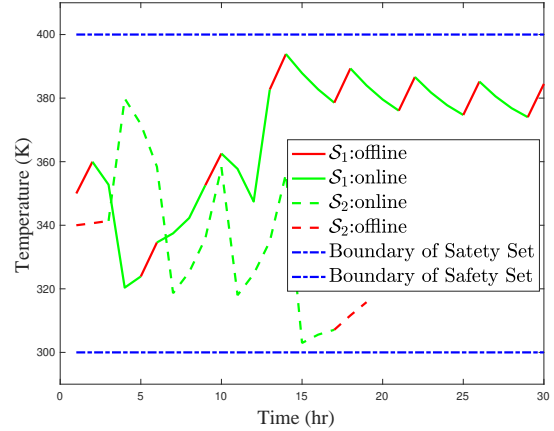}
    \caption{This figure presents the temperature in CSTRs $\mathcal{S}_1$ and $\mathcal{S}_2$ after they are interconnected in series. The temperature of $\mathcal{S}_1$ and $\mathcal{S}_2$ are represented in solid and dashed lines. The portion plotted in red represents that the CSTR is faulty or compromised, whereas the portion plotted in green denotes that the desired rate of heat input is available. }
    \label{fig:temperature}
\end{figure}

We assume that the safety constraints defined for CSTRs $\mathcal{S}_1$ and $\mathcal{S}_2$ are $\mathcal{C}_1=\{T_1:h_1(T_1)\geq 0\}$ and $\mathcal{C}_2=\{T_2:h_2(T_2)\geq 0\}$, where $h_i(T_i)=(T_i-300)(400-T_i)$ for all $i\in\{1,2\}$.
That is, the temperature in both CSTR needs to be within range $[300,400]$K.
Both CSTRs can be faulty or compromised, leading to manipulated rate of heat input $\Tilde{u}_i$, where $i\in\{1,2\}$.

\begin{table}[ht!]
\centering
\caption{This table presents the values of process parameters used in Eqn. \eqref{eq:chemical 1} and \eqref{eq:chemical 2}.}
\begin{tabular}{|c|c|} 
 \hline
 \textbf{Parameter Value} &  \textbf{Unit} \\ [0.5ex] 
 \hline
 $F_{e,1}=4.998$, $F_{e,2}=30.0$, $F_1=4.998$ & $m^3/hr$ \\
 \hline
 $V_1=1.0$, $V_2=3.0$ & $m^3$\\  
 \hline
 $T_{0,1}=300.0$, $T_{0,2}=300.0$ & K\\
 \hline
 $E_1=5.0$, $E_2=7.53$, $E_3=7.53$ & $\times 10^4$ KJ/kmol\\
 \hline
 $H_1=-5.0$, $H_2=-5.2$, $H_3=-5.4$ & $\times 10^4$ KJ/kmol\\
 \hline
 $\rho=1000.0$ & kg/$m^3$\\
 \hline
 $k_1=3.0\times 10^6$, $k_2=3.0\times 10^5$,  $k_3=3.0\times 10^5$ & hr$^{-1}$\\
 \hline
 $l=8.314$ &KJ/kmol$\cdot$K\\
 \hline
 $p=0.231$ &KJ/kg$\cdot$ K\\
 \hline 
 $c_{0,1}=4.0$, $c_{0,2}=2.0$  &kmol/$m^3$\\
 \hline 
\end{tabular}
\label{tab:values}
\end{table}

We first compute the resilience indices of CSTRs $\mathcal{S}_1$ and $\mathcal{S}_2$ when they are not interconnected.
Their resilience indices are given as $(d_1,\tau_1,\phi_1,\eta_1) = (2100,0.0146,0.308,0)$ and $(d_2,\tau_2,\phi_2,\eta_2) = (500,0.0292,0.0222,0)$.
Following dynamics Eqn. \eqref{eq:chemical 1} and \eqref{eq:chemical 2}, we have that the $\delta_{12}$ defined in Eqn. \eqref{eq:delta def} is negative, and hence we aim to solve $R_2$ to compute the resilience index of CSTR $\mathcal{S}_2$ after interconnection.
We have that $(d_2',\tau_2',\phi_2',\eta_2') = (800,0.0237,0.1368,0)$.
We observe that by fixing $\tau_2=\tau_2'$, $d_2'<d_2$, $\phi_2'<\phi_2$, and $\eta_2'>\eta_2$, which aligns with our result in Proposition \ref{prop:positive delta}.
We simulate the temperature in both CSTRs in Fig. \ref{fig:temperature}.
We plot the time period when the control input is compromised in red color, and the time period when the desired control law is online in green color.
We observe that the fault or attack could manipulate the temperature in both CSTRs by changing the rate of heat input $u_i$.
We further demonstrate that safety constraints defined on the temperature of $\mathcal{S}_1$ and $\mathcal{S}_2$ are met. 
We plot the boundaries of the safety set $\mathcal{C}_1$ and $\mathcal{C}_2$ using dash-dotted blue lines. 
We observe that $T_1$ and $T_2$ remain within $[300,400]$K for all time $t\geq 0$, and hence safety constraint is satisfied if we can find a feasible resilience index, which demonstrates Theorem \ref{thm:safety interconnect}.

\section{Conclusion}\label{sec:conclusion}

In this paper, we investigated the problem of efficient safety verification for large-scale interconnected systems under faults and attacks.
We developed a compositional resilience index for each subsystem to characterize its capability on tolerating faults and attacks without violating safety constraints.
We showed that if a subsystem possessed a resilience index, then it satisfies the given safety constraint regardless of the faults and attacks.
We formulated a sum-of-squares optimization program to compute the resilience index.
When the resilience index and a safe control law of a subsystem were given, we proved that the resilience index of the subsystem after being interconnected could be computed by solving a system of linear inequalities.
We further developed the sufficient conditions over the interconnections to guarantee the derived linear inequalities to be feasible.
We demonstrated the proposed approach using a case study on interconnected chemical reactors.

\bibliographystyle{IEEEtran}
\bibliography{MyBib}

\end{document}